\documentclass[journal]{IEEEtran}

\usepackage{amsmath, amsthm, amsfonts}
\usepackage{tikz}

\usetikzlibrary{matrix}
\usetikzlibrary{decorations.pathreplacing}

\newcommand{\ket}[1]{|{#1}\rangle}
\newcommand{\bra}[1]{\langle{#1}|}

\DeclareMathOperator{\Tr}{Tr}
\DeclareMathOperator{\wt}{wt}

\newtheorem{proposition}{Proposition}
\newtheorem{theorem}[proposition]{Theorem}

\setcounter{MaxMatrixCols}{20}

\title{A Combinatorial Interpretation for the Shor-Laflamme Weight Enumerators of CWS Codes}
\author{Andrew~Nemec%,~\IEEEmembership{Student~Member,~IEEE,}
        ~and~Andreas~Klappenecker%,~\IEEEmembership{Member,~IEEE,}% <-this % stops a space
\thanks{This research was supported in part by a Texas A\&M University T3 grant.}%
\thanks{A. Nemec and A. Klappenecker are with the Department
of Computer Science \& Engineering, Texas A\&M University, College Station,
TX, 77843 USA e-mail: (nemeca@tamu.edu; klappi@cse.tamu.edu).}}

\begin{document}

\maketitle

\begin{abstract}
We show that one of the Shor-Laflamme weight enumerators of a codeword stabilized quantum code may be interpreted as the distance enumerator of an associated classical code.
\end{abstract}

\begin{IEEEkeywords}
Quantum error-correcting codes, codeword stabilized codes, nonadditive codes, Shor-Laflamme weight enumerators.
\end{IEEEkeywords}

\IEEEpeerreviewmaketitle

\section{Introduction}

The weight enumerators of classical error-correcting codes arise in the derivation of the upper bounds for code parameters via linear programming \cite{Delsarte1972}. These weight enumerators admit a combinatorial interpretation, as for additive codes they count the number of codewords of each weight in the code, and more generally for nonadditive codes they describe the distances between each pair of codewords. The weight enumerators of a code and its dual code are connected by the MacWilliams identity \cite{Macwilliams1963}.

Shor and Laflamme defined a pair of weight enumerators for quantum codes \cite{Shor1997} which were used by Ashikhmin and Litsyn to develop linear programming bounds for the parameters of quantum codes \cite{Ashikhmin1999}. Rains also defined the similar unitary enumerators \cite{Rains1998}, as well as the quantum analogue of the shadow enumerators \cite{Rains1999c} which provide sharper linear programming bounds when used in conjunction with the Shor-Laflamme weight enumerators. Weight enumerators have also been used to derive linear programming bounds for other quantum code variants, such as subsystem codes \cite{Aly2006}, asymmetric quantum codes \cite{Sarvepalli2009}, hybrid codes \cite{Grassl2017}, entanglement-assisted codes \cite{Lai2018}, and quantum amplitude damping codes \cite{Ouyang2020}.

While the Shor-Laflamme weight enumerators do not in general have a known combinatorial interpretation similar to the weight enumerators of classical codes, they do for the well-known class of stabilizer codes. The $A\!\left(z\right)$ Shor-Laflamme weight enumerator counts the number of elements of each weight in the stabilizer group associated with the stabilizer code, while the $B\!\left(z\right)$ Shor-Laflamme weight enumerator does the same for elements in the centralizer of the stabilizer group \cite{Gottesman1997}. These correspond to the weight enumerators of a classical self-orthogonal additive code and its dual code.

For non-stabilizer quantum codes (known as nonadditive quantum codes), little is known about the weight enumerators. The unitary enumerators may be interpreted as the binomial moments of the distance distribution of classical codes \cite{Ashikhmin2000}, and the Shor-Laflamme weight enumerator $A\!\left(z\right)$ can be interpreted as the two-norms of the $j$-body correlations of the code \cite{Huber2017}, but their remains no combinatorial interpretation of the Shor-Laflamme weight enumerator $B\!\left(z\right)$. In this paper, we show that for the nonadditive codeword stabilized codes \cite{Cross2009}, the Shor-Laflamme weight enumerator $B\!\left(z\right)$ may be interpreted as the distance enumerator of an associated nonadditive classical code, partially answering a question recently posed by Ball, Centelles, and Huber \cite{Ball2020}.

\section{Background}

For classical codes, the distance between codewords is given by the Hamming distance: $$d_{H}\!\left(x,y\right)=\left\lvert\left\{i\mid x_{i}\neq y_{i}\right\}\right\rvert.$$ The distance distribution $A$ of an $\left(n,M,d\right)$ classical code $C$ is a vector of length $\left(n+1\right)$, where $$A_{i}=\frac{1}{M}\left\lvert\left\{\left(x,y\right)\mid x,y\in C,d_{H}\!\left(x,y\right)=i\right\}\right\rvert,$$ meaning that $A_{i}$ is the number of codewords at distance $i$ from each other, normalized by the size of the code. The polynomial $$A\!\left(z\right)=\sum\limits_{i=0}^{n}A_{i}z^{i}$$ is the distance enumerator of the code. The minimum distance $d$ of the code is the smallest index $i\neq0$ such that $A_{i}$ is non-zero.

The Hamming weight of a codeword is the distance from the all zero codeword, that is $\wt_{H}\!\left(x\right)=d_{H}\!\left(x,0^{n}\right)$. If $C$ is an additive code, that is a code which is closed under addition, then $A$ counts the number of codewords of each weight, so $$A_{i}=\left\lvert\left\{x\mid x\in C,\wt\!\left(x\right)=i\right\}\right\rvert,$$ and we call $A$ the weight distribution and $A\!\left(z\right)$ the weight enumerator of the code. The weight enumerator of an additive code $C$ is connected to the weight enumerator $B\!\left(z\right)$ of its dual code $C^{\perp}$ by the MacWilliams identity \cite{Delsarte1972, Macwilliams1963}: $$B\!\left(z\right)=\frac{\left(1+z\right)^{n}}{M}A\!\left(\frac{1-z}{1+z}\right).$$ For a nonadditive code, the MacWilliams identity may still be formally defined in the same way, although the resultant polynomial in general does not correspond to the distance enumerator of any code \cite{Macwilliams1972, Cohen1997}.

An $\left(\!\left(n,K,d\right)\!\right)$ quantum code $\mathcal{C}$ on $n$ physical qubits is a $K$-dimensional subspace of the Hilbert space $\mathbb{C}^{2^{n}}$. Let $X$ and $Z$ be the Pauli operators $$X=\begin{pmatrix} 0 & 1 \\ 1 & 0 \end{pmatrix}\text{ and }Z=\begin{pmatrix} 1 & 0 \\ 0 & -1 \end{pmatrix}.$$ A basis for the linear operators of the Hilbert space can be given by tensor products of the Pauli operators: $$\mathcal{E}_{n}=\left\{E_{1}\otimes\cdots\otimes E_{n}\mid E_{i}=X^{a_{i}}Z^{b_{i}}, a_{i},b_{i}\in\mathbb{F}_{2}\right\}.$$ Each element $E\in\mathcal{E}_{n}$ can be associated with a unique codeword $\left(a\mid b\right)=\left(a_{1},\dots,a_{n}\mid b_{1},\dots,b_{n}\right)$ of length $2n$. The distance between two codewords of this type is given by the symplectic distance: $$d_{s}\!\left(\left(a\mid b\right),\left(a'\mid b'\right)\right)=\left\lvert\left\{k\mid\left(a_{k},b_{k}\right)\neq\left(a'_{k},b'_{k}\right)\right\}\right\rvert.$$ The symplectic weight $\wt\!\left(E\right)$ is the number of non-identity tensor components $E_{i}$ make up $E$.

The most well studied class of quantum codes are the stabilizer codes \cite{Gottesman1997}. An $\left[\!\left[n,k,d\right]\!\right]$ stabilizer code is defined by its stabilizer group $\mathcal{S}$, which is generated by $n-k$ mutually commuting independent operators $S_{i}\in G_{n}$ (which does not include $-I$), where $$G_{n}=\left\{i^{\ell}E\mid \ell\in\mathbb{Z}_{4},E\in\mathcal{E}_{n}\right\}$$ is the error group on $n$ qubits. The stabilizer code is then the $2^{k}$-dimensional joint $+1$-eigenspace of $\mathcal{S}$. Associated with the stabilizer group is its centralizer in $C_{G_{n}}\!\left(\mathcal{S}\right)$, the group of all elements in $G_{n}$ that commute with every element in $\mathcal{S}$. These are the operators that act as the logical operators on the encoded states of the code.

Shor and Laflamme \cite{Shor1997} defined a pair of weight enumerators $A\!\left(z\right)$ and $B\!\left(x\right)$ for quantum codes in the following fashion: \begin{equation*} A_{i}=\frac{1}{K^{2}}\sum\limits_{\substack{E\in\mathcal{E}_{n} \\ \wt\left(E\right)=i}}\Tr\!\left(EP\right)\Tr\!\left(E^{*}P\right) \end{equation*} and \begin{equation*} B_{i}=\frac{1}{K}\sum\limits_{\substack{E\in\mathcal{E}_{n} \\ \wt\left(E\right)=i}}\Tr\!\left(EPE^{*}P\right), \end{equation*} where $P$ is the orthogonal projector onto the code $\mathcal{C}$. In general, the weight enumerators of quantum codes do not seem to admit as nice a combinatorial interpretation as they do for classical codes. However, for stabilizer codes there is such an interpretation, as $A\!\left(z\right)$ counts the number of elements of each weight in the stabilizer group $\mathcal{S}$ and $B\!\left(z\right)$ counts the number of elements in the centralizer $C_{G_{n}}\!\left(\mathcal{S}\right)$ (modulo the phases on the Pauli elements). Additionally, each element of the stabilizer and centralizer can be associated with a unique (up to phase) codeword of length $2n$. Let $C$ be the code containing the set of codewords associated with $\mathcal{S}$. Then its symplectic dual $C^{\perp}$ is the code associated with $C_{G_{n}}\!\left(\mathcal{S}\right)$. Additionally, since $\mathcal{S}\leq C_{G_{n}}\!\left(\mathcal{S}\right)$ (as $\mathcal{S}$ is Abelian), we have that $C\subseteq C^{\perp}$, that is $C$ is self-orthogonal.

\section{Weight Enumerators of CWS Codes}

Codeword stabilized (CWS) codes were introduced by Cross et al. \cite{Cross2009} as a framework to construct quantum codes that includes all stabilizer codes and most nonadditive codes with good parameters, for example, see \cite{Rains1997, Yu2007, Yu2008}. A CWS code is comprised of two objects: a stabilizer group $\mathcal{S}$ generated by $n$ mutually commuting independent elements of $G_{n}$ (not including $-I$), so $\mathcal{S}$ stabilizes a single stabilizer state $\ket{\varphi}$ which comprises a $1$-dimensional stabilizer code, and a collection $\mathcal{T}$ of $K$ commuting codeword operators $T_{i}\in G_{n}$ such that each $T_{i}$ is from a separate coset of $G_{n}/\mathcal{S}$. Without loss of generality, we can choose $T_{1}=I$.

The set $\mathcal{TS}=\left\{i^{\ell}T_{i}S_{j}\mid T_{i}\in\mathcal{T},S_{j}\in\mathcal{S},\ell\in\mathbb{Z}_{4}\right\}$ plays a similar role to the centralizer of a stabilizer code, and a CWS code is a stabilizer code precisely when $\mathcal{TS}$ forms an additive group. We show now that the Shor-Laflamme weight enumerator $B\!\left(z\right)$ of a CWS code is the distance enumerator of the classical code associated with the set $\mathcal{TS}$.

\begin{theorem}
Let $\mathcal{C}$ be a CWS quantum code. Let $C$ be the classical symplectic code associated with the set $\mathcal{TS}$. Then the Shor-Laflamme weight enumerator $B\!\left(z\right)$ of the quantum code $\mathcal{C}$ is the distance enumerator of the classical code $C$.
\end{theorem}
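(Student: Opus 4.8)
The plan is to start from the definition $B_i = \frac{1}{K}\sum_{\wt(E)=i}\Tr\!\left(EPE^{*}P\right)$ and reduce the trace quantities to a combinatorial count over the set $\mathcal{TS}$. First I would write the projector $P$ onto the CWS code explicitly in terms of the codeword operators and the stabilizer. Since $\ket{\varphi}$ is the stabilizer state fixed by $\mathcal{S}$, its projector is $\ket{\varphi}\!\bra{\varphi} = \frac{1}{2^{n}}\sum_{S\in\mathcal{S}} S$, and the CWS code is spanned by the orthonormal basis $\{T_i\ket{\varphi}\}_{i=1}^{K}$, so $P = \sum_{i=1}^{K} T_i \ket{\varphi}\!\bra{\varphi} T_i^{*}$. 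The key computation is then to evaluate $\Tr\!\left(EPE^{*}P\right)$ by expanding both copies of $P$, which yields a double sum over codeword operators $T_i, T_j$ and, after inserting the stabilizer projectors, over elements of $\mathcal{S}$.

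The crucial structural step is to exploit the commutation (Pauli) algebra: any two elements of $G_n$ either commute or anticommute, and $\Tr\!\left(\ket{\varphi}\!\bra{\varphi}\,G\,\ket{\varphi}\!\bra{\varphi}\,G'\right)$ is nonzero only when the relevant product of Pauli operators lies in $\mathcal{S}$ up to phase. Because each $T_i$ comes from a distinct coset of $G_n/\mathcal{S}$, the orthogonality $\braket{\varphi}{T_i^{*}T_j\varphi} = \delta_{ij}$ (up to the stabilizer phase) collapses the double sum over $i,j$, forcing $E$ to act as a map that sends the basis vector $T_j\ket{\varphi}$ to a scalar multiple of some basis vector $T_i\ket{\varphi}$. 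I would show that $E T_j \ket{\varphi}$ is proportional to $T_i\ket{\varphi}$ precisely when $E$ equals (up to phase) the symplectic combination $T_i T_j^{*} S$ for some $S\in\mathcal{S}$, i.e.\ when the codeword of $E$ agrees with the symplectic sum of the codewords of $T_i$, $T_j^{*}$, and some stabilizer element. This is exactly the condition that the codeword of $E$ lies at the appropriate position in the $\mathcal{TS}$ structure.

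Carrying this through, I expect $\Tr\!\left(EPE^{*}P\right)$ to reduce to counting the number of ordered pairs of elements of $\mathcal{TS}$ whose symplectic difference equals the codeword associated with $E$, normalized so that summing over all $E$ of weight $i$ recovers a count of pairs at symplectic distance $i$. Since $\wt(E) = i$ means the codeword of $E$ has symplectic weight $i$, and the symplectic difference of two codewords $(a\mid b)$ and $(a'\mid b')$ has symplectic weight equal to their symplectic distance $d_s$, the sum $B_i = \frac{1}{K}\sum_{\wt(E)=i}\Tr\!\left(EPE^{*}P\right)$ becomes $\frac{1}{K}\left\lvert\{(u,v)\mid u,v\in C,\, d_s(u,v)=i\}\right\rvert$, which is precisely $A_i$ in the definition of the distance enumerator for the classical code $C$ of size $K\cdot|\mathcal{S}|$, once the normalization by $K$ versus $|C|$ is reconciled.

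The main obstacle I anticipate is the careful bookkeeping of phases and the normalization constant. The elements of $\mathcal{TS}$ carry phases from $\mathbb{Z}_4$, and the codewords of length $2n$ are defined only up to these phases; I must verify that $\Tr\!\left(EPE^{*}P\right)$ depends only on the underlying symplectic codeword and not on the phase, so that the count is well defined on $C$. Relatedly, I need to confirm that the anticommutation signs that arise when moving $E$ past $T_j$ and past the stabilizer elements all cancel in the trace (they appear symmetrically through $E$ and $E^{*}$), and that the factor of $|\mathcal{S}|=2^{n}$ from expanding the stabilizer projectors matches the size of $C$ so that the normalization $\frac{1}{M}$ in the classical distance distribution coincides with the $\frac{1}{K}$ in the Shor-Laflamme definition. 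Getting these constants and signs to align exactly is where the real work lies.
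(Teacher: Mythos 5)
Your proposal follows essentially the same route as the paper's proof: expand $P=\sum_{i}T_{i}\ket{\varphi}\bra{\varphi}T_{i}^{*}$, show via the commute-or-anticommute argument that $\left\lvert\bra{\varphi}T_{j}^{*}ET_{i}\ket{\varphi}\right\rvert^{2}$ equals $1$ exactly when $E\in T_{i}^{*}T_{j}\mathcal{S}$ and $0$ otherwise, and translate the resulting count into symplectic distances in the classical code $C=TS$. The normalization you flag as the remaining work resolves exactly as you anticipate: each contributing index pair $\left(i,j\right)$ corresponds to exactly $2^{n}$ ordered pairs $\left(t_{i}+s,t_{j}+u\right)\in C^{2}$ whose difference is the codeword of $E$ (one for each splitting $w=s+u$), which converts your $\frac{1}{K}$ prefactor into the $\frac{1}{\left\lvert C\right\rvert}=\frac{1}{K2^{n}}$ required by the classical distance distribution, and counting ordered codeword pairs rather than set elements also automatically absorbs the double-counting issue (when $T_{i}T_{j}=T_{k}T_{\ell}$) that the paper handles separately.
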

\begin{proof}
Let $P$ be the projector onto $\mathcal{C}$, $\mathcal{S}$ be the stabilizer group of the stabilizer state $\ket{\varphi}$, and $\mathcal{T}=\left\{T_{1},T_{2},\dots,T_{K}\right\}$ be the set of codeword operators. We can write $P$ as \begin{equation*}
    P=\sum\limits_{i=1}^{K} T_{i}\ket{\varphi}\bra{\varphi}T_{i}^{*}.
\end{equation*}

We can expand the weight enumerator as \begin{align*}
    B_{d} & = \frac{1}{K}\sum\limits_{\substack{E\in\mathcal{E}_{n} \\ \wt\left(E\right)=d}}\Tr\!\left(EPE^{*}P\right) \\
    & = \frac{1}{K}\sum\limits_{\substack{E\in\mathcal{E}_{n} \\ \wt\left(E\right)=d}}\sum\limits_{i,j=1}^{K}\left\lvert\bra{\varphi}T_{j}^{*}E T_{i}\ket{\varphi}\right\rvert^{2}.
\end{align*}

Since elements of the Pauli group either commute or anticommute with each other, it follows that $\left\lvert\bra{\varphi}t_{j}^{*}E t_{i}\ket{\varphi}\right\rvert^{2}=1$ for all error operators $E\in T_{i}^{*}T_{j}\mathcal{S}$. Furthermore, if $E\notin T_{i}^{*}T_{j}\mathcal{S}$, we may write $E=T'F$, where $F\in\mathcal{S}$ and $T'\neq T_{i}^{*}T_{j}=T_{i}T_{j}^{*}$ is a coset representative of $T'\mathcal{S}\neq T_{i}^{*}T_{j}\mathcal{S}$. Then \begin{align*} \bra{\varphi}T_{j}^{*}E T_{i}\ket{\varphi} & = \pm\bra{\varphi}T_{j}^{*}T_{i}T'F\ket{\varphi} \\ & = \pm\bra{\varphi}T_{j}^{*}T_{i}T'\ket{\varphi} \\ & = 0, \end{align*} as there must be a stabilizer element $s\in\mathcal{S}$ that anticommutes with $T_{i}^{*}T_{j}T'$, and so \begin{align*} \pm\bra{\varphi}T_{j}^{*}T_{i}T'\ket{\varphi} & = \pm\bra{\varphi}sT_{j}^{*}T_{i}T'\ket{\varphi} \\ & = \mp\bra{\varphi}T_{j}^{*}T_{i}T's\ket{\varphi} \\ & = \mp\bra{\varphi}T_{j}^{*}T_{i}T'\ket{\varphi},\end{align*} implying that $\bra{\varphi}T_{j}^{*} T_{i}T'\ket{\varphi}=0$. Therefore, we have that \begin{equation*}
    \left\lvert\bra{\varphi}T_{j}^{*}E T_{i}\ket{\varphi}\right\rvert^{2} = \begin{cases}
    1, & E\in T_{i}^{*}T_{j}\mathcal{S} \\
    0, & \text{otherwise}
    \end{cases}
\end{equation*}

Let $S$ be the set of $2^{n}$ classical codewords associated with the elements of $\mathcal{S}$, $T=\left\{t_{1},t_{2},\dots,t_{k}\right\}$ the set of classical codewords associated with $\mathcal{T}$, and $F$ a set of $2^{n}$ classical codewords such that $T\subset F$ and $FS=\left\{0,1\right\}^{2n}$. Any codeword may be written as $f+s$, where $f\in F$ and $s\in S$. Given two pairs of codewords $c=\left(f+s,g+u\right),c'=\left(f'+s',g'+u'\right)\in\left\{0,1\right\}^{4n}$, we define the equivalence relation $\sim$ such that $c\sim c'$ if and only if $s+u=s'+u'$, $f=f'$, and $g=g'$. It is straightforward to check that $\sim$ is indeed a reflexive, symmetric, and transitive relation, and therefore an equivalence relation.

Since $S$ is a normal subgroup of $\left(FS\right)^{2}$, the quotient group $\left(FS\right)^{2}/S$ is isomorphic to $F^{2}S$. Denote elements of this set by $f+g+w$, where $f,g\in F$, $w\in S$. For all pairs of codewords $\left(f+s,g+u\right)$ in the same partition, $d_{s}\!\left(f+s,g+u\right)=\wt_{s}\!\left(f+g+w\right)$, where $w=s+u$. This means that for the classical code $C=TS$, the distance distribution \begin{align*}
    A_{d} = & \frac{1}{K2^{n}}\left\lvert\left\{\left(t_{i}+s,t_{j}+u\right)\mid d_{s}\!\left(t_{i}+s,t_{j}+u\right)=d\right\}\right\rvert \\
    = & \frac{1}{K}\lvert\{t_{i}+t_{j}+w\mid t_{i},t_{j}\in T, w\in S, \\ & \;\;\;\;\;\;\wt_{s}\!\left(t_{i}+t_{j}+w\right)=d\}\rvert.
\end{align*}

We associate $t_{i}$, $t_{j}$, and $w$ with the quantum operators $T_{i}$, $T_{j}$, and $W$. Note that $\left\lvert\bra{\varphi}T_{j}^{*}E T_{i}\ket{\varphi}\right\rvert^{2}=1$ if and only if $E=T_{i}^{*}T_{j}W$, meaning that the weight distribution of the quantum code is identical to the distance distribution of the classical code. In the case that there are two (or more) pairs of codeword operators such that $T_{i}T_{j}=T_{k}T_{\ell}$, the operator $E$ might be counted twice by both $\left\lvert\bra{\varphi}T_{j}^{*}E T_{i}\ket{\varphi}\right\rvert^{2}$ and $\left\lvert\bra{\varphi}T_{\ell}^{*}E T_{k}\ket{\varphi}\right\rvert^{2}$, but this is offset by $E$ not being checked separately as $E=T_{i}^{*}T_{j}W$ and $E=T_{k}^{*}T_{\ell}W$.

This shows that the Shor-Laflamme weight enumerator $B\!\left(z\right)$ of the quantum code $\mathcal{C}$ is the same as the distance enumerator of its associated classical code $C$.
\end{proof}

\begin{figure}[t]
\centering
\begin{equation*}\small
G=\left(
\begin{tikzpicture}[baseline=-.5ex]
\matrix[
  matrix of math nodes,
  column sep=.25ex, row sep=-.25ex
] (m)
{
1 & 0 & 0 & 0 & 0 & 0 & 0 & 0 & 0 & 0 & 1 & 0 & 0 & 0 & 0 & 0 & 0 & 1 \\
0 & 1 & 0 & 0 & 0 & 0 & 0 & 0 & 0 & 1 & 0 & 1 & 0 & 0 & 0 & 0 & 0 & 0 \\
0 & 0 & 1 & 0 & 0 & 0 & 0 & 0 & 0 & 0 & 1 & 0 & 1 & 0 & 0 & 0 & 0 & 0 \\
0 & 0 & 0 & 1 & 0 & 0 & 0 & 0 & 0 & 0 & 0 & 1 & 0 & 1 & 0 & 0 & 0 & 0 \\
0 & 0 & 0 & 0 & 1 & 0 & 0 & 0 & 0 & 0 & 0 & 0 & 1 & 0 & 1 & 0 & 0 & 0 \\
0 & 0 & 0 & 0 & 0 & 1 & 0 & 0 & 0 & 0 & 0 & 0 & 0 & 1 & 0 & 1 & 0 & 0 \\
0 & 0 & 0 & 0 & 0 & 0 & 1 & 0 & 0 & 0 & 0 & 0 & 0 & 0 & 1 & 0 & 1 & 0 \\
0 & 0 & 0 & 0 & 0 & 0 & 0 & 1 & 0 & 0 & 0 & 0 & 0 & 0 & 0 & 1 & 0 & 1 \\
0 & 0 & 0 & 0 & 0 & 0 & 0 & 0 & 1 & 1 & 0 & 0 & 0 & 0 & 0 & 0 & 1 & 0 \\
0 & 0 & 0 & 0 & 0 & 0 & 0 & 0 & 0 & 0 & 0 & 0 & 0 & 0 & 0 & 0 & 0 & 0 \\
0 & 0 & 0 & 0 & 0 & 0 & 0 & 0 & 0 & 0 & 1 & 0 & 0 & 0 & 1 & 1 & 0 & 0 \\
0 & 0 & 0 & 0 & 0 & 0 & 0 & 0 & 0 & 0 & 0 & 0 & 1 & 1 & 0 & 0 & 0 & 1 \\
0 & 0 & 0 & 0 & 0 & 0 & 0 & 0 & 0 & 0 & 1 & 1 & 0 & 0 & 1 & 0 & 1 & 0 \\
0 & 0 & 0 & 0 & 0 & 0 & 0 & 0 & 0 & 0 & 0 & 1 & 0 & 1 & 0 & 0 & 1 & 1 \\
0 & 0 & 0 & 0 & 0 & 0 & 0 & 0 & 0 & 0 & 1 & 1 & 1 & 1 & 1 & 1 & 1 & 1 \\
0 & 0 & 0 & 0 & 0 & 0 & 0 & 0 & 0 & 1 & 0 & 0 & 1 & 0 & 0 & 1 & 0 & 0 \\
0 & 0 & 0 & 0 & 0 & 0 & 0 & 0 & 0 & 1 & 1 & 0 & 1 & 0 & 1 & 0 & 0 & 0 \\
0 & 0 & 0 & 0 & 0 & 0 & 0 & 0 & 0 & 1 & 0 & 0 & 0 & 1 & 0 & 1 & 0 & 1 \\
0 & 0 & 0 & 0 & 0 & 0 & 0 & 0 & 0 & 1 & 1 & 1 & 1 & 0 & 1 & 1 & 1 & 0 \\
0 & 0 & 0 & 0 & 0 & 0 & 0 & 0 & 0 & 1 & 0 & 1 & 1 & 1 & 0 & 1 & 1 & 1 \\
0 & 0 & 0 & 0 & 0 & 0 & 0 & 0 & 0 & 1 & 1 & 1 & 0 & 1 & 1 & 0 & 1 & 1 \\
};
\draw[line width=.5pt]
  ([yshift=.2ex] m-21-10.south west) -- ([yshift=.2ex] m-1-10.north west);
\draw[dashed]
  ([yshift=.2ex] m-9-1.south west) -- ([yshift=.2ex] m-9-18.south east);
\end{tikzpicture}\mkern-5mu
\right) \normalsize
\end{equation*}
\caption{Generating matrix and coset representatives for the classical code associated with the $\left(\!\left(9,12,3\right)\!\right)$ CWS quantum code, with the generating matrix for the linear code $D$ above the dashed line and the 12 codewords in $T$ below it.}
\label{cws9}
\end{figure}

Using the $\left(\!\left(9,12,3\right)\!\right)$ CWS code constructed by Yu et al. \cite{Yu2008} as an example, the code has Shor-Laflamme weight enumerators \begin{equation*} A\!\left(z\right)=1+\frac{2}{3}z^{4}+\frac{32}{3}z^{6}+\frac{64}{3}z^{7}+9z^{8} \end{equation*} and \begin{align*} B\!\left(z\right)={ }& 1+68z^{3}+242z^{4}+684z^{5}+1464z^{6} \\ & +1852z^{7}+1365z^{8}+468z^{9}. \end{align*} The nonadditive classical code $C$ associated with the set $\mathcal{TS}$ is constructed using the linear code $D$ generated by $G$ and taking the union of cosets $$C=\bigcup\limits_{i=1}^{12}\left(t_{i}+D\right),$$ where $G$ and $t_{i}$ are defined in Figure \ref{cws9}. Calculating the symplectic distance between each pair of codewords, we find that the distance enumerator \begin{align*} A'\!\left(z\right) ={ }& 1+68z^{3}+242z^{4}+684z^{5}+1464z^{6} \\ & +1852z^{7}+1365z^{8}+468z^{9} \end{align*} is identical to the Shor-Laflamme weight enumerator $B\!\left(z\right)$ of the quantum code.

One interesting observation about the $\left(\!\left(9,12,3\right)\!\right)$ code is that the distance enumerator has all integral-valued coefficients which count the number of codewords of each weight in the code, so $B\!\left(z\right)$ also counts the number of elements of each weight in $\mathcal{TS}$ like a traditional weight enumerator for stabilizer codes. This also holds true for the $\left(\!\left(10,24,3\right)\!\right)$ CWS code constructed by Yu et al. \cite{Yu2007}, but not for the $\left(\!\left(7,22,2\right)\!\right)$ code constructed by Smolin et al. \cite{Smolin2007}, the weight enumerator of which has nonintegral coefficients.

\section{Conclusion}

In this paper we give a combinatorial interpretation for the Shor-Laflamme weight enumerator $B\!\left(z\right)$ for codeword stabilized codes, by connecting the centralizer analogue to a classical code. One question that remains is whether there is a similar combinatorial interpretation for the Shor-Laflamme weight enumerators for nonadditive codes not equivalent to CWS codes. Another question is which CWS codes are similar to the $\left(\!\left(9,12,3\right)\!\right)$ and $\left(\!\left(10,24,3\right)\!\right)$ codes whose weight enumerators $B\!\left(z\right)$ count the number of elements in $\mathcal{TS}$, similar to the case with stabilizer codes.

\bibliographystyle{IEEEtran}
%\bibliography{cwsweightenum}
% Generated by IEEEtran.bst, version: 1.14 (2015/08/26)

\end{document}